\newtheorem{theorem}{Theorem}
\newtheorem{lemma}{Lemma}
\theoremstyle{plain}
\begin{document}
\date{\today}
\author{Nikolaos K. Kollas}
\email{kollas@upatras.gr}
\affiliation{Department of Physics, University of Patras, Greece}
\title{Optimization free measures of quantum resources}
\begin{abstract}
A closed expression is derived for the amount of resource present in a quantum state as quantified by a distance measure based on the \emph{Tsallis relative entropy} introduced recently in \cite{Zhao2018}, for any resource theory whose set of free states is described by the image of a unital \emph{resource destroying map}. By applying it to the resource theory of coherence it is demonstrated that the correct definition of a \emph{projective coarse grained measurement} needs to be modified in order for it to be correctly interpreted as a measurement which provides less information about a system's true state than the one obtained from a \emph{finer grained} measurement.
\end{abstract}
\maketitle
\section{Introduction}
A \emph{resource} is defined as any physical object that is consumed during a process in order to perform some kind of useful action, such as the burning of fossil fuel for the generation of mechanical work to operate machinery or the amount of processing power needed to perform a computation. In the context of quantum information theory, a \emph{quantum resource} is similarly defined as any state or channel which can be utilized in order to simulate quantum operations which, due to certain restrictions, are otherwise unavailable to us. A well known example includes the use of an entangled Bell state needed for teleporting an unknown quantum state between two parties \cite{Bennett1993}. Over the past few years a number of researchers have developed various quantum resource theories such as the resource theory of entanglement \cite{Horodecki2009}, purity \cite{Horodecki2003a}, coherence \cite{Baumgratz2014,Chitambar2016,Winter2016,Marvian2016,Napoli}, athermality \cite{Brandao2013,Horodecki2013a,Gour2015} and asymmetry \cite{Gour2008,Gour2009,Marvian2014} amongst others.  In any formulation one has to distinguish between the states that constitute a resource and those that are resource free and then construct a suitable measure for the amount of resource present in each case.

One of the most intuitive and straightforward ways to do this involves determining how close the given state is to the set of resource free states denoted by $\mathcal{F}$. This can be achieved by introducing a distance function on the set of density matrices, the amount of resource that a state possesses will then be equal to it's distance from $\mathcal{F}$. The logic behind such a distance based measure is simple, the further away the state lies from the set of resource free states the more useful it must be as a resource. Unfortunately for most distance functions the minimization required can only be obtained by optimization methods making their use impractical.

In this letter we extend a result by Zhao et. al. \cite{Zhao2018} and prove that by restricting attention to resource theories whose set of resource free states is equal to the fixed points of an \emph{idempotent} and \emph{unital} resource destroying map one can construct a continuous family of resource measures given by a closed expression. After a short discussion on the formulation of quantum resource theories we shall present this closed form and discuss it's properties. We then apply iy to the resource theory of coherence and demonstrate that the notion of a \emph{coarse grained measurement} requires modification for it to be consistently interpreted as a measurement which yields less information about a system's true state than a \emph{finer grained} one.

In what follows we shall focus on finite dimensional systems with Hilbert space $\mathbb{C}^d$, quantum states are then equal to the set of positive definite Hermitian operators with unit trace. 
\section{Quantum resource theories}\label{sec1}
To construct a theory of \emph{quantum resources} one begins by specifying the set of \emph{resource free} states $\mathcal{F}$. The class of allowed operations $\mathcal{L}$ that we are allowed to implement can then be defined as those trace preserving and completely positive (CPTP) quantum operations $\Lambda$ which leave the set invariant.
\begin{equation*}
	\mathcal{L}:=\qty{\Lambda\in(CPTP)|\Lambda(\sigma)\in\mathcal{F},\forall \sigma \in \mathcal{F}}
\end{equation*}
By definition any state $\rho\not\in\mathcal{F}$ is considered to be a \emph{resource} and can be used to implement operations outside of $\mathcal{L}$ temporarily overcoming any restrictions that were imposed.

By employing the operator sum representation of a quantum operation we can define an even stronger set of allowed operations, namely let $\Lambda({\sigma})=\sum_i\Lambda_i\sigma\Lambda_i^\dagger$ with $\sum_i\Lambda_i^\dagger\Lambda_i=I$ then
\begin{equation*}
	\bar{\mathcal{L}}:=\qty{\Lambda\in(CPTP)\bigg| \frac{\Lambda_i\sigma\Lambda_i^\dagger}{p_i}\in\mathcal{F},\forall \sigma \in \mathcal{F},\forall i}
\end{equation*}
where $p_i=Tr\Lambda_i\sigma\lambda_i^\dagger$. Clearly $\bar{\mathcal{L}}\subseteq\mathcal{L}$. The physical motivation behind the stronger subset of allowed operations is that retaining a record of the classical variable $i$, a process which is physically realizable in the lab \cite{Baumgratz2014},  will not produce a resource state.

A few examples include the theory of \emph{entanglement} \cite{Horodecki2009} where the class of \emph{Local Operations and Classical Communication} (LOCC) leaves the set of \emph{separable states} invariant, the resource theory of \emph{coherence} and \emph{purity} where \emph{Incoherent} (IO) and unitary operations leave invariant a set of states which are diagonal with respect to a fixed basis and the totally mixed state respectively, \cite{HorodeckiOppenheim2013} and the theory of \emph{quantum reference frames} and \emph{asymmetry} where a $G$-covariant operation leaves the set of symmetric states invariant, $G$ being the group describing the symmetry in question. \cite{Bartlett2007,Gour2008,Gour2009}. In some cases it is also possible to consider two resources jointly, this seems to be the case for entanglement and coherence under the extended class of Local Incoherent Operations with Classical Communication (LIOCC) \cite{Streltsov2015,Chitambar2016}.

Once a resource theory has been formulated in such terms it is necessary to find a way of measuring the amount of resource present in any given state. This is made possible with the help of suitable \emph{resource measures}, i.e. non-negative real valued functions on the set of density matrices. In order to qualify as a resource measure any candidate $\mu$ is required to have the following properties
\begin{enumerate}[i)]
	\item $\mathcal{F}\subseteq ker(\mu)$ i.e $\mu(\sigma)=0$, $\forall\sigma\in\mathcal{F}$
    \item $\mu(U\rho U^\dagger)=\mu(\rho)$ for every unitary operation $U\in\bar{\mathcal{L}}$
	\item $\mu$ must be non-increasing on average, specifically if $\Lambda(\rho)=\sum_ip_i\rho_i$ then $\sum_ip_i\mu(\rho_i)\leq\mu(\rho)$ for every $\Lambda\in\bar{\mathcal{L}}$
    \item $\mu$ must be non-increasing under mixing i.e. $\mu(\sum_ip_i\rho_i)\leq\sum_ip_i\mu(\rho_i)$
\end{enumerate}
By combining properties iii) and iv) which are known as the \emph{strong monotonicity} and \emph{convexity} properties respectively one can easily show that $\mu(\Lambda(\rho))\leq\mu(\rho)$, $\forall\Lambda\in\bar{\mathcal{L}}$.

One way of constructing such measures involves considering the distance of the state from the set of free states $\mathcal{F}$. Specifically any distance function $d(\rho,\sigma)$ on the set of density matrices induces a corresponding \emph{distance measure} $\mu_d(\rho)$ defined as
\begin{equation}\label{eq1}
	\mu_d(\rho)=\min_{\sigma\in\mathcal{F}}d(\rho,\sigma)
\end{equation}
Examples of some of the most commonly used distances include the \emph{relative entropy} \cite{Vedral1998} $S(\rho|\sigma)=-\tr\rho\log\sigma+\tr\rho\log\rho$, the \emph{trace norm} $\norm{\rho-\sigma}_1=\tr\abs{\rho-\sigma}$ and the \emph{Hilbert-Schmidt} norm $\norm{\rho-\sigma}_2=\sqrt{\tr(\rho-\sigma)^2}$ \cite{bhatia1997}, though technically speaking the relative entropy does not qualify as a distance function since it is not symmetric in it's arguments ($S(\rho|\sigma)\neq S(\sigma|\rho)$) it is frequently used because of it's operational interpretation and it's connection to asymptotic conversion rates \cite{HorodeckiOppenheim2013,Brandao2015}.

\section{Optimization free measures}
We now restrict our attention to those resource theories whose set of free states is equal to the image of an idempotent and unital \emph{resource destroying map} \cite{Liu2017,Gour2017}. Namely let $\mathcal{E}$ be a completely positive and trace preserving linear operation with the following properties
\begin{itemize}
	\item $\mathcal{E}^2=\mathcal{E}$ (idempotency).
    \item $\mathcal{E}(I)=I$ (unitality).
\end{itemize}

With the exception of entanglement for all of the resource theories referred to earlier the set of free states is given by the image of such a suitably chosen operation on the set of density matrices i.e. $\mathcal{F}\equiv Im(\mathcal{E})$. This is indeed the case for the resource theory of coherence where $\mathcal{E}(\rho)=\sum_iP_i\rho P_i$ is the state after a projective measurement is performed in a fixed basis, the resource theory of purity where by definition $\mathcal{E}(\rho)=\frac{I}{d}$ and $d$ is the dimension of the underlying Hilbert space \footnote{Note that this is equivalent to the resource theory of assymmetry in the case of a degenerate Hamiltonian.}, and the resource theory of quantum reference frames and asymmetry where $\mathcal{E}(\rho)=\int U(g)\rho U^\dagger(g)dg$ is the G-twirling operation associated with the group G, $U(g)$ is the unitary representation of the group and $dg$ the invariant measure. Noting that due to idempotency the image of $\mathcal{E}$ is equal to it's set of fixed points, i.e. $Fix(\mathcal{E})=\{\sigma|E(\sigma)=\sigma\}$, Equation (\ref{eq1}) can then be rewritten as
\begin{equation*}
	\mu_d^{\mathcal{E}}(\rho)=\min_{\sigma\in Fix(\mathcal{E})}d(\rho,\sigma)
\end{equation*}
We now employ the \emph{Tsallis based relative entropy distance} introduced in\cite{Zhao2018}
\begin{equation*}
	\tilde{S_a}(\rho|\sigma)=
   	\begin{cases}
    	\frac{(Tr\rho^a\sigma^{1-a})^{\frac{1}{a}}-1}{a-1} & a\in(0,1)\cup(1,2]\\
        Tr\rho\log\rho-Tr\rho\log\sigma & a=1
    \end{cases}
\end{equation*}
In this case we generalize  their basic result about the coherence of a quantum state and show that
\begin{theorem}
		$$\min_{\sigma\in Fix(\mathcal{E})}\tilde{S}_a(\rho|\sigma)=\frac{Tr\mathcal{E}^{\frac{1}{a}}(\rho^a)-1}{a-1}$$
\end{theorem}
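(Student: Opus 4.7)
The plan is a two-step reduction to an unconstrained optimization. In each resource theory cited in the paper, $\mathcal{E}$ is a conditional expectation onto its image, and I would use two consequences of this structure throughout: $\mathcal{E}$ is self-adjoint with respect to the Hilbert--Schmidt inner product, and $Fix(\mathcal{E})$ is closed under the continuous functional calculus, i.e.\ $\sigma\in Fix(\mathcal{E})$ implies $\sigma^{r}\in Fix(\mathcal{E})$ for every real $r$ for which $\sigma^{r}$ is defined.

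Step one would pull the dependence on $\sigma$ into $\mathcal{E}(\rho^{a})$. For any $\sigma\in Fix(\mathcal{E})$, closure under functional calculus gives $\sigma^{1-a}\in Fix(\mathcal{E})$, so that $\sigma^{1-a}=\mathcal{E}(\sigma^{1-a})$; self-adjointness of $\mathcal{E}$ then yields $\mathrm{Tr}[\rho^{a}\sigma^{1-a}]=\mathrm{Tr}[\mathcal{E}(\rho^{a})\sigma^{1-a}]$. Setting $X:=\mathcal{E}(\rho^{a})\in Fix(\mathcal{E})$ and observing that $\tilde{S}_{a}$ depends monotonically on $\mathrm{Tr}[\rho^{a}\sigma^{1-a}]$ --- increasing for $a>1$ and decreasing for $a<1$ --- the task reduces to extremizing $\mathrm{Tr}[X\sigma^{1-a}]$ over states $\sigma\in Fix(\mathcal{E})$.

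Step two would propose the ansatz $\sigma^{\ast}=X^{1/a}/\mathrm{Tr}\,X^{1/a}$. Closure under functional calculus puts $\sigma^{\ast}$ in $Fix(\mathcal{E})$, and because $\sigma^{\ast}$ commutes with $X$ a direct calculation gives $\mathrm{Tr}[X(\sigma^{\ast})^{1-a}]=(\mathrm{Tr}\,X^{1/a})^{a}$. Global optimality would follow from two inputs: a Lagrange-multiplier check, using the Daleckii--Krein form of the Fr\'echet derivative, shows that at $\sigma^{\ast}$ the gradient is $(1-a)(\mathrm{Tr}\,X^{1/a})^{a}I$, proportional to the identity as required by the unit-trace constraint; and by L\"owner--Heinz, the map $\sigma\mapsto\mathrm{Tr}[X\sigma^{1-a}]$ is convex for $a\in(1,2]$ (exponent in $[-1,0)$) and concave for $a\in(0,1)$, promoting the critical point to the global minimizer or maximizer over all states and therefore over the subset $Fix(\mathcal{E})$ that contains $\sigma^{\ast}$. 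Taking the $1/a$-th power and dividing by $a-1$ then produces the claimed closed form, and the case $a=1$ follows by continuity, reproducing the standard relative-entropy result $\mathrm{Tr}\,\rho\log\rho-\mathrm{Tr}\,\rho\log\mathcal{E}(\rho)$.

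The principal obstacle is justifying the conditional-expectation structure of $\mathcal{E}$: the pull-back identity of Step one and the admissibility of $\sigma^{\ast}$ in Step two both rely on it. It is automatic for the three examples of the paper (dephasing, $G$-twirling, completely depolarizing), each of which is a conditional expectation onto the appropriate $\ast$-subalgebra, but would require either a separate argument or replacement by a direct matrix H\"older bound in order to extend the theorem to generic unital idempotent CPTP maps.
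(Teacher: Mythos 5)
Your proposal is correct, and its first half coincides with the paper's argument: both hinge on the pull-back identity $\mathrm{Tr}[\rho^a\sigma^{1-a}]=\mathrm{Tr}[\mathcal{E}(\rho^a)\sigma^{1-a}]$, obtained from closure of $Fix(\mathcal{E})$ under the functional calculus, and both identify the same optimizer $\sigma^\ast=\mathcal{E}(\rho^a)^{1/a}/\mathrm{Tr}\,\mathcal{E}(\rho^a)^{1/a}$. Where you genuinely diverge is in certifying that $\sigma^\ast$ is globally optimal. The paper does it with an exact algebraic identity: writing $N=\mathrm{Tr}\,\mathcal{E}^{1/a}(\rho^a)$ it expresses $\tilde S_a(\rho|\sigma)$ as $\frac{N-1}{a-1}$ plus $N$ times $\tilde S_a(\sigma^\ast|\sigma)$ (the factor $N$ is dropped in the paper's display, a harmless slip) and then invokes non-negativity of the Tsallis relative entropy, which vanishes iff its arguments coincide. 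You instead run a convex-optimization certificate: a Daleckii--Krein gradient computation showing the first-order condition at $\sigma^\ast$, promoted to global optimality by operator convexity/concavity of $t^{1-a}$ on the relevant ranges of $a$ (this is the operator-convexity theorem for powers rather than L\"owner--Heinz proper --- a naming quibble, not a gap). Your route is longer and needs a word about the boundary of the state set (for $a>1$ the blow-up of $\sigma^{1-a}$ at singular $\sigma$ keeps the minimum interior; for $a<1$ concavity plus a gradient proportional to $I$ suffices on the simplex), but it avoids importing positivity of $\tilde S_a$ as a black box; the paper's finish is two lines once that positivity (an Araki--Lieb--Thirring-type fact cited from the literature) is granted. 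Finally, the ``principal obstacle'' you flag --- self-adjointness of $\mathcal{E}$ --- is sidestepped in the paper by a weaker observation that is all the pull-back step needs: for a unital trace-preserving map the Hilbert--Schmidt adjoint satisfies $Fix(\mathcal{E}^\dagger)=Fix(\mathcal{E})$, so $\mathrm{Tr}[\rho^a\sigma^{1-a}]=\mathrm{Tr}[\rho^a\mathcal{E}^\dagger(\sigma^{1-a})]=\mathrm{Tr}[\mathcal{E}(\rho^a)\sigma^{1-a}]$ without assuming $\mathcal{E}=\mathcal{E}^\dagger$.
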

\begin{proof}
	To begin with we note that as was already mentioned in \cite{Gour2009} (where the proof for $a=1$ was given) if f(x) is analytic at x=1 then $f(\sigma)\in\ Fix(\mathcal{E})$, $\forall\sigma\in Fix(\mathcal{E})$. Also for any unital superoperator one can define it's adjoint as $tr\mathcal{E}^\dagger(\rho)\sigma=tr\rho\mathcal{E}(\sigma)$ and further show that $Fix(\mathcal{E}^\dagger)=Fix(\mathcal{E})$.  Applying this to the function $f(x)=x^{\frac{1}{a}}$ for $a\in(0,1)\cup(1,2]$ and setting $N=Tr\mathcal{E}^{\frac{1}{a}}(\rho^a)$ we find that
\begin{equation*}
	\begin{split}
    	Tr\rho^a\sigma^{1-a}&=Tr\rho^a\mathcal{E}^\dagger(\sigma^{1-a})\\
        &=Tr\mathcal{E}(\rho^a)\sigma^{1-a}\\
        &=N^aTr\left(\frac{\mathcal{E}^{\frac{1}{a}}(\rho^a)}{N}\right)^a\sigma^{1-a}
    \end{split}
\end{equation*}%
So
\begin{equation*}
    	\tilde{S}_a(\rho|\sigma)=\frac{N-1}{a-1}+\tilde{S}_a\left(\frac{\mathcal{E}^{1/a}(\rho^a)}{N}\bigg|\sigma\right)
\end{equation*}
and the minimum is given by $\sigma=\frac{\mathcal{E}^{\frac{1}{a}}(\rho^a)}{N}$.
\end{proof}

Due to it's definition the Tsallis based relative entropy satisfies properties i) to iii) \cite{Vedral1998,Zhao2018}. We now proceed to show that it is also convex.
\begin{lemma}
$\mu_{\tilde{S}_a}^\mathcal{E}(\rho)$ is operator convex.
\end{lemma}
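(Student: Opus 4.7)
The plan is to exploit the closed-form expression derived in Theorem~1, writing
\[
\mu_{\tilde{S}_a}^\mathcal{E}(\rho) = \frac{G(\rho)-1}{a-1},\qquad G(\rho) := Tr\bigl[\mathcal{E}^{1/a}(\rho^a)\bigr],
\]
so that convexity of $\mu$ becomes equivalent to convexity of $G$ for $a\in(1,2]$ and concavity of $G$ for $a\in(0,1)$, the outer map $x\mapsto(x-1)/(a-1)$ being affine with slope sign $\mathrm{sgn}(a-1)$. The limiting case $a=1$ is immediate: there the measure reduces to $S(\rho\|\mathcal{E}(\rho))$, whose convexity follows from Lindblad's joint convexity of the quantum relative entropy and the linearity of $\mathcal{E}$, applied to $S(\sum_i p_i\rho_i\|\sum_i p_i\mathcal{E}(\rho_i))\leq\sum_i p_i S(\rho_i\|\mathcal{E}(\rho_i))$.

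For $a\neq 1$, I would combine the variational characterization $\mu(\rho)=\min_{\sigma\in Fix(\mathcal{E})}\tilde{S}_a(\rho|\sigma)$ with the closed-form minimizers supplied by Theorem~1. Given a mixture $\rho=\sum_i p_i\rho_i$, the individual optimizers are $\sigma_i=\mathcal{E}^{1/a}(\rho_i^a)/N_i\in Fix(\mathcal{E})$, and because $Fix(\mathcal{E})$ is convex (being the image of a linear map) the combination $\bar\sigma:=\sum_i p_i\sigma_i$ also lies in $Fix(\mathcal{E})$ and is a legitimate, albeit suboptimal, reference state for $\rho$. By the minimum property one then has $\mu(\rho)\leq\tilde{S}_a(\rho|\bar\sigma)$, and joint convexity of $\tilde{S}_a$ in both arguments would yield
\[
\mu\Bigl(\sum_i p_i\rho_i\Bigr)\leq\tilde{S}_a\Bigl(\sum_i p_i\rho_i\Bigm|\sum_i p_i\sigma_i\Bigr)\leq\sum_i p_i\tilde{S}_a(\rho_i|\sigma_i)=\sum_i p_i\mu(\rho_i).
\]

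The \emph{main obstacle} is therefore the joint convexity of $\tilde{S}_a$ itself across the range $a\in(0,1)\cup(1,2]$. The inner trace $(\rho,\sigma)\mapsto Tr(\rho^a\sigma^{1-a})$ is jointly concave for $a\in(0,1)$ by Lieb's theorem and jointly convex for $a\in(1,2]$ by the Lieb-Ruskai extension, while the outer map $x\mapsto(x^{1/a}-1)/(a-1)$ is concave-decreasing for $a<1$ and concave-increasing for $a>1$. These concavities compose in opposing directions, so joint convexity of $\tilde{S}_a$ cannot be extracted by a naive Jensen chain and will likely require an integral representation of $x^{1/a}$ via resolvents, an operator Jensen inequality of Hansen-Pedersen type, or an appeal to the refined convexity properties of the modified Tsallis divergence established in~\cite{Zhao2018}. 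An alternative route would instead establish the required convexity/concavity of $G$ directly, chaining the operator convexity of $\rho\mapsto\rho^a$ (for $a\in[1,2]$), the positivity of $\mathcal{E}$, and the operator monotonicity of $x\mapsto x^{1/a}$; the delicate point is that taking the trace of $X^{1/a}$, itself a concave function of $X$, can flip the direction of the chained operator inequalities.
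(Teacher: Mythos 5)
Your reduction is the right one and matches the paper's starting point: since $x\mapsto (x-1)/(a-1)$ is affine with slope of sign $\mathrm{sgn}(a-1)$, convexity of $\mu_{\tilde S_a}^{\mathcal E}$ is equivalent to convexity of $G(\rho)=\Tr\bigl[\mathcal E(\rho^a)\bigr]^{1/a}$ for $a\in(1,2]$ and concavity for $a\in(0,1)$, and the $a=1$ case follows from joint convexity of the relative entropy. However, you then stop short of proving the statement: your first route requires joint convexity of $\tilde S_a$, which, as you yourself observe, does not follow from any naive composition (for $a<1$ the outer map is concave decreasing applied to a jointly concave trace functional, and for $a\in(1,2]$ it is concave increasing applied to a jointly convex one; neither composition yields convexity), and your second route founders on exactly the point you flag, namely that chaining operator convexity of $\rho\mapsto\rho^a$ with operator monotonicity of $x\mapsto x^{1/a}$ and then tracing does not control the sign of the resulting inequality. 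So the proposal identifies the obstacle honestly but does not overcome it; as written there is a genuine gap.

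The missing ingredient is the Carlen--Lieb (Epstein-type) concavity theorem cited in the paper: for fixed $X$ and positive $A$, the map $A\mapsto \Tr\,(XA^aX^\dagger)^{1/a}$ is concave for $a\in(0,1]$ and convex for $a\in[1,2]$. This handles the troublesome composition $\Tr(\,\cdot\,)^{1/a}\circ\mathcal E\circ(\,\cdot\,)^a$ as a single functional rather than by chaining operator inequalities. The paper applies it by dilating the Kraus representation $\mathcal E(\rho)=\sum_{i=1}^n E_i\rho E_i^\dagger$: taking $X$ to be the block row $(E_1,\dots,E_n)$ padded with zeros and $A=\bigoplus_{i=1}^n\rho$ (which is linear in $\rho$), one gets $XA^aX^\dagger=\mathcal E(\rho^a)\oplus 0$ and hence $\Tr(XA^aX^\dagger)^{1/a}=\Tr\bigl[\mathcal E(\rho^a)\bigr]^{1/a}=G(\rho)$, so the required convexity/concavity of $G$ is inherited directly. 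If you supply this theorem and the block-matrix construction, your argument closes and coincides with the paper's proof.
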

\begin{proof}
	It is known that for any positive matrix $A$, $tr(XA^aX^\dagger)^\frac{1}{a}$ is concave for $ a\in(0,1]$ and convex for $a\in[1,2]$ \cite{Carlen2008}. Let $\mathcal{E}(\rho)=\sum_{i=1}^nE_i\rho E^\dagger_i$ be the operator sum representation of $\mathcal{E}$ and let
\begin{equation*}
X=\begin{pmatrix}
	E_1 & E_2 &\cdots& E_n\\
    0&0&\cdots & 0\\
    \vdots&\vdots&\vdots&\vdots\\
    0&0&\cdots&0\end{pmatrix}
 \quad A=\begin{pmatrix}
 	\rho&0&\cdots&0\\
    0&\rho&\cdots&0\\
   \vdots&\vdots&\vdots&\vdots\\
   0&0&\cdots&\rho
 \end{pmatrix}
\end{equation*}
where $X$ and $A$ are matrices in the direct sum of $n$ copies of $\mathcal{H}$, $\mathcal{H}'=\bigoplus_{i=1}^n\mathcal{H}$. It follows that
\begin{equation*}
XA^aX^\dagger=
	\begin{pmatrix}
		\mathcal{E}(\rho^a)&0&\cdots&0\\
    0&0&\cdots&0\\
   \vdots&\vdots&\vdots&\vdots\\
   0&0&\cdots&0
	\end{pmatrix}
\end{equation*}
and $Tr_{\mathcal{H}'}(XA^aX^\dagger)^\frac{1}{a}=Tr_{\mathcal{H}}\mathcal{E}^\frac{1}{a}(\rho^a)$. Combining this with the definition of the Tsallis based relative entropy completes the proof.
\end{proof}
\section{Resource theory of projective quantum measurements.}
The act of measurement consists of gaining information about a system by e.g. measuring the system's position or momentum, and then using this information in order to make an estimate of the system's true state \cite{Marehand1977}. It is natural to assume that by repeating the same measurement we obtain no new information and that no more information can be extracted from a completely mixed state. Quantum mechanically both of these assumptions are satisfied by a \emph{projective quantum measurement}. This is described by a set of mutually orthogonal projectors $P_i$ summing to the identity. The state obtained after outcome $i$ has occurred with probability $p_i=Tr\rho P_i$ is then equal to $\rho_i=\frac{P_i\rho P_i}{p_i}$ and our estimate is given by $\Pi(\rho)=\sum_i P_i\rho P_i$ (in what follows we shall identify projective measurements by their corresponding set of projections and write $\Pi=\{P_i\}$). As was already mentioned by keeping the set of projectors fixed we obtain the resource theory of coherence. 
\subsection{Coarse grained and fine grained measurements.}
Let $n_i=Tr P_i$ be the degeneracy of the i-th projection. In a \emph{fine grained} (or Von Neumann \cite{von1955mathematical}) measurement $n_i=1$, $\forall i$ while in a \emph{coarse grained} (or L\"uders \cite{ANDP:ANDP200610207}) measurement $n_i\geq 1$. We now consider a special class of coarse grained measurements where each projection $L_i$ is a sum of fine grained projections. Specifically let $\Pi=\{P_i\}$, $i\in I$ be a fixed fine grained measurement then the corresponding coarse grained measurement $\bar{\Pi}=\{L_j\}$, $j\in J$ is obtained by partitioning the set of indexes $i$ into $J$ separate subsets $I_j$ and setting $L_j=\sum_iP_i\chi_{I_j}(i)$, $j\in J$ where $\chi_{I_j}$ is the characteristic function of set $I_j$. 
In \cite{Piani2014} the authors pointed out that for any distance $d$, $d(\rho,\Pi(\rho))\geq d(\rho,\bar{\Pi}(\rho))$, this is counter-intuitive as we would expect that by performing fine grained measurements we gain more information about the state and that our estimate would get closer to it. Following \cite{Wehrl1977} let us slightly modify our estimate of the state after the measurement and demand that it be equal to $\tilde{\Pi}(\rho)=\sum_ip_i\frac{L_i}{Tr L_i}$ where $p_i=Tr\rho L_i$, note that if each $L_i$ is one dimensional then $\tilde\Pi$ is equal to the fine-grained measurement. We now prove the following.
\begin{theorem}
 $$\mu_{\tilde{S}_a}^\Pi(\rho)\leq\mu_{\tilde{S}_a}^{\tilde{\Pi}}(\rho)$$
\end{theorem}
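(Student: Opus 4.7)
The cleanest route is a direct subset argument. The fixed-point set of $\tilde\Pi$ consists exactly of operators of the form $\sum_j c_j\, L_j/n_j$, each of which is diagonal in the fine-grained basis $\{|i\rangle\}$ since $L_j/n_j = (1/n_j)\sum_{i\in I_j}|i\rangle\langle i|$. Hence $\mathrm{Fix}(\tilde\Pi)\subseteq\mathrm{Fix}(\Pi)$, and minimizing $\tilde S_a(\rho|\sigma)$ over the smaller set can only give a larger value, which is $\mu_{\tilde S_a}^\Pi(\rho)\leq\mu_{\tilde S_a}^{\tilde\Pi}(\rho)$ at once.

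A proof more in the spirit of the paper would instead apply Theorem~1 to both maps, once $\tilde\Pi$ is shown to be CPTP, unital and idempotent---unitality from $\sum_j L_j=I$, idempotency from $L_jL_k=\delta_{jk}L_j$ together with $\mathrm{Tr}\,L_j=n_j$, and complete positivity from the Kraus family $\{n_j^{-1/2}|k\rangle\langle i| : j\in J,\, i,k\in I_j\}$. Setting $x_i=\langle i|\rho^a|i\rangle$ and $T_j=\sum_{i\in I_j}x_i$, one obtains $\mathrm{Tr}\,\Pi^{1/a}(\rho^a)=\sum_i x_i^{1/a}$ directly, while $\tilde\Pi(\rho^a)=\sum_j(T_j/n_j)L_j$ is already in spectral form with eigenvalue $T_j/n_j$ of multiplicity $n_j$, giving $\mathrm{Tr}\,\tilde\Pi^{1/a}(\rho^a)=\sum_j n_j^{1-1/a}\,T_j^{1/a}$. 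Jensen's inequality applied to $x\mapsto x^{1/a}$ then compares these block by block---concavity for $a\in(1,2]$ gives one direction, convexity for $a\in(0,1)$ the other---and in both regimes the sign of the block-wise inequality matches the sign of the prefactor $1/(a-1)$ in Theorem~1, yielding the same conclusion. The $a=1$ case can be handled either as a limit, or by direct relative-entropy calculation in which the gap equals $\sum_j T_j\, D_{\mathrm{KL}}(\{x_i/T_j\}_{i\in I_j}\,\|\,\text{uniform on }I_j)\geq 0$.

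The main obstacle lies entirely in the second, closed-form route: the multiplicity $n_j$ must be tracked so that the prefactor $n_j^{1-1/a}$ produced by the spectral decomposition of $\tilde\Pi(\rho^a)$ is exactly the one on which Jensen acts, and the simultaneous reversal of Jensen's inequality and of $\mathrm{sgn}(a-1)$ must be reconciled so the conclusion is uniform on $(0,1)\cup(1,2]$. The subset argument bypasses these details entirely and is therefore the more robust path.
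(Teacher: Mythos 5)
Your first route is exactly the paper's proof: the paper likewise invokes Theorem~1 for both maps and then observes that $\Pi(\tilde\Pi(\rho))=\tilde\Pi(\rho)$, i.e.\ $\mathrm{Fix}(\tilde\Pi)\subseteq\mathrm{Fix}(\Pi)$, so that minimizing over the smaller set can only increase the value. The second, closed-form Jensen computation is a correct but unnecessary alternative; the subset argument you lead with is the intended one.
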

\begin{proof}
	Since $\Pi^2=\Pi$, $\tilde\Pi^2=\tilde\Pi$ and $\Pi(I)=\tilde\Pi(I)=I$ according to Theorem 1
$$\mu_{\tilde{S}_a}^\Pi(\rho)=\min_{\sigma\in Fix(\Pi)}\tilde{S}_a(\rho|\sigma)$$
and 
$$\mu_{\tilde{S}_a}^{\tilde{\Pi}}(\rho)=\min_{\sigma\in Fix(\tilde \Pi)}\tilde{S}_a(\rho|\sigma)$$
It is easy now to see that $\Pi(\tilde{\Pi}(\rho))=\tilde{\Pi}(\rho)$ while $\tilde\Pi(\Pi(\rho))=\tilde{\Pi}(\rho)$, this means that $Im(\tilde\Pi)\subset Im(\Pi)$ from which the theorem immediately follows.
\end{proof}
\section{Discussion and conclusions.}
We have shown that for any quantum resource theory whose set of free states is given by the image of an idempotent and unital resource destroying map $\mathcal{E}$ on the set of density matrices, the amount of resource present in a state is given by a closed form when the Tsallis based relative entropy is used as a distance based measure.  By applying it to the resource theory of projective measurements we demonstrated how to properly define coarse grained measurements such that the information obtained of a system's true state is increased under a finer grained measurement.

In future research \cite{Kollas} it will be shown how for bipartite systems, taking into account the set of allowed projective measurements that each observer can implement on it's respective subsystem, and by employing an optimization procedure involving the set of allowed unitary operations  we can recover measures of quantum correlations such as the \emph{quantum discord} which is known to be equal to the \emph{entropy of entanglement} in the special case where the state is pure \cite{Bravyi2003}. The same approach can also be applied to systems which do not admit a subsystem decomposition such as a qutrit, revealing a new novel feature of quantum mechanics, namely the restrictions on acquiring information about a state imposed by symmetry considerations. 

Finally an interesting open question concerns the case when the set of free states is described by a resource destroying map which is not unital, as is the case in the resource theory of athermality and the recently developed resource theory of imaginarity \cite{Hickey2018}.

\section*{Acknowledgments}
The author wishes to thank Charis Anastopoulos, Konstantinos Blekos, and D. Kat and for many fruitful discussions which helped in presenting the ideas in this paper. This research was supported by Grant No. E611 from the Research Committee of the University of Patras via the ''K. Karatheodoris'' program.
\bibliographystyle{apsrev4-1}
\bibliography{supplementaryV2}
\end{document}